\documentclass[12pt,draftcls,journal,onecolumn]{IEEEtran}

\usepackage{amssymb,amsthm, amsmath,latexsym}
\usepackage{graphicx}
\usepackage{mathrsfs}
\usepackage{amsfonts}
\usepackage{amssymb}
\usepackage{longtable}
\usepackage{amsmath}
\usepackage{setspace}
\usepackage{caption}
\usepackage[figuresright]{rotating}
\IfFileExists{ifsym.sty}{\usepackage[misc]{ifsym}}{}
\IfFileExists{bbm.sty}{\usepackage{bbm}}{}
\usepackage{makecell}
\usepackage{arydshln}
\usepackage{supertabular}
\usepackage{booktabs}
\usepackage{color}

\newtheorem{theorem}{Theorem}
\newtheorem{lemma}[theorem]{Lemma}
\newtheorem{remark}[theorem]{Remark}

\newtheorem{corollary}[theorem]{Corollary}

\newtheorem{definition}[theorem]{Definition}

\newcommand{\F}{{\mathbb{F}}}

\usepackage{blindtext}

\ifCLASSINFOpdf

\else

\fi

\begin{document}
%
\title{ Constructions of LCPs and LCD codes from twisted Reed-Solomon codes
}
\author{Shuo Sun, Wenwen Chen, Chao Liu, Yaozong Zhang$^{*}$, Xiaoqiang Wang
}

\renewcommand{\thefootnote}{\empty}
\footnotetext{\thanks{~*Corresponding author. }
\newline \indent ~Shuo Sun is with the School of Mathematics and Statistics, Central China Normal University (E-mail: sunnshuoo@163.com).
\newline \indent ~Wenwen Chen, Chao Liu, Yaozong Zhang, Xiaoqiang Wang are with the Hubei Key Laboratory of Applied Mathematics, Faculty of Mathematics and Statistics, Hubei University, Wuhan 430062, China (E-mail: chewewe@163.com; chliuu@163.com; zyzsdutms@163.com; waxiqq@163.com).
}

\maketitle

\begin{abstract}
Linear complementary pairs (LCPs) and linear complementary dual (LCD) codes have important applications in orthogonal direct-sum masking (ODSM), which provides effective countermeasures against side-channel attacks and fault-injection attacks. While LCD codes have been extensively investigated, comparatively fewer results are available for general LCPs. In this paper, we further investigate LCPs of twisted Reed--Solomon (TRS) codes. We derive necessary conditions for two TRS codes to form an LCP and establish several sufficient conditions and explicit constructions. We also study LCD codes constructed from TRS codes and investigate the security parameters of the resulting LCPs. Furthermore, under suitable conditions, we obtain MDS LCPs of TRS codes.
\end{abstract}

%
\textbf{2020 Mathematics Subject Classification:} 94B05, 94B15.

\textbf{Keywords:} LCPs, LCD codes, MDS codes, twisted Reed--Solomon codes.

%
\IEEEpeerreviewmaketitle

\begingroup\small
\noindent\textbf{Review key.} Red text and tags E01--E03 identify confirmed
source or bibliographic errors. E01 marks an invisible duplicate label at two
section headings; the heading text itself is correct. Existing blue text is
from the original manuscript. Suggested mathematical clarifications N01--N02
are recorded in source comments and in the separate revision report.
\par\endgroup\medskip

\section{Introduction}\label{sec-introduction}

\renewcommand{\thetheorem}{\arabic{section}.\arabic{theorem}}
\setcounter{theorem}{0}

Let $\F_q$ be the finite field with $q$ elements, where $q$ is a prime power. An $[n,k,d]_q$ linear code $C$ is a $k$-dimensional subspace of $\F_q^n$ with minimum Hamming distance $d$.
Let \(C,D\subseteq\F_q^n\) be linear codes of dimensions \(k\) and
\(n-k\), respectively. The dual code of a linear code \(C\), denoted by
\(C^\perp\), is defined as follows:
\[
C^\perp=\{\mathbf b\in\F_q^n:\mathbf b\cdot\mathbf c=0,\
\text{for all }\mathbf c\in C\},
\]
where ``\(\cdot\)'' denotes the Euclidean inner product. If \(C\) and
\(D\) are linear codes over \(\F_q\) with the same length \(n\) and
respective dimensions \(k\) and \(n-k\) such that
\[
C\oplus D=\F_q^n,
\]
then the pair \((C,D)\) is called a linear complementary pair (LCP) of
codes.
When $D=C^\perp$, the LCP condition becomes $C\cap C^\perp=\{\mathbf{0}\}$, and $C$ is called a linear complementary dual (LCD) code.


LCP and LCD codes arise naturally in the implementation of cryptographic countermeasures. Even when a cryptographic algorithm is secure from a theoretical point of view, its physical implementation may leak information through side channels or may be vulnerable to fault-injection attacks. The orthogonal direct-sum masking scheme introduced in \cite{Bringer2014,CarletGuilley2016} uses two complementary subspaces $C$ and $D$ to encode sensitive data and masks. An important security parameter of such a construction is
\[
\min\{d(C),d(D^\perp)\}.
\]
In the LCD case $D=C^\perp$, this quantity reduces to $d(C)$.

LCD codes were introduced by Massey \cite{Massey1992} and have subsequently been investigated in many different settings; see, for example, \cite{CarletEtAl2018LCD,EsmaeiliYari2009,GalindoEtAl2019,WuLee2020,ZhouEtAl2019}. By comparison, the theory of general LCPs is less developed. Carlet et al. \cite{CarletEtAl2018LCP} studied LCPs arising from constacyclic and quasi-cyclic codes. Lobillo and Mu\~noz \cite{LobilloMunoz2023} considered skew constacyclic codes, while Bhowmick et al. \cite{Bhowmick2023} investigated LCPs constructed from algebraic-geometry codes. From the viewpoint of security parameters, Carlet et al. \cite{CarletEtAl2019Sigma} showed that, for $q>2$, LCPs can attain the best possible minimum-distance behavior for suitable parameters. Optimal binary LCPs constructed from Solomon--Stiffler codes were obtained in \cite{Guneri2023}.

Another active direction in coding theory concerns twisted Reed--Solomon codes. Classical generalized Reed--Solomon codes form one of the most important families of MDS codes. Twisted Reed--Solomon codes, introduced in \cite{BeelenPuchingerRosenkilde2022}, modify the usual polynomial evaluation space by adding carefully chosen high-degree terms whose coefficients depend on selected low-degree coefficients. These constructions can produce MDS codes that are not equivalent to generalized Reed--Solomon codes. Their structural and cryptographic properties have therefore received considerable attention \cite{BeelenEtAl2018,HuangYueNiu2023,LiuLiu2021,LavauzelleRenner2020,ZhangZhouTang2022,ZhuLiao2021}.

Most previous work on LCD twisted Reed--Solomon codes focuses either on a single twist or on particular configurations of multiple twists. In the present paper, we consider a more general multiple-twist setting and investigate when two twisted Reed--Solomon codes form an LCP.

Our approach is based on a simple observation. If
\(
\mathbf{V}_j=(\alpha_1^j,\alpha_2^j,\ldots,\alpha_n^j),
\)
with \(\alpha_1,\ldots,\alpha_n\) are distinct evaluation points, then
\(
\mathbf{V}_0,\mathbf{V}_1,\ldots,\mathbf{V}_{n-1}
\)
form a basis of \(\mathbb F_q^n\), since the corresponding Vandermonde
matrix is nonsingular. Therefore, every generator row of a TRS code can
be uniquely expressed as a linear combination of these basis vectors.
Consequently, the stacked generator matrix can be written as
\(
G=AV,
\)
where \(A\) is the coefficient matrix with respect to the Vandermonde
basis \( \{\mathbf V_0,\ldots,\mathbf V_{n-1}\}\) and \(V\) is the
nonsingular Vandermonde matrix. Hence,
\(
G\text{ is nonsingular if and only if }A\text{ is nonsingular}.
\)
Therefore, the LCP problem can be reduced to studying the coefficient
matrix \(A\). This viewpoint avoids repeated Vandermonde computations
and makes the role of the exponents of the twist terms in the generator
matrices transparent.

The remainder of the paper is organized as follows. Section~II recalls twisted Reed--Solomon codes and a basic matrix
criterion for LCPs. Section~III derives necessary exponent conditions and establishes several sufficient conditions for constructing LCPs of twisted Reed--Solomon codes. MDS LCPs are then obtained by combining these results with a known subfield-chain construction. Section~IV specializes the method to dual pairs and gives several constructions of LCD twisted Reed--Solomon codes. Finally, Section~V concludes the paper.

\section{Preliminaries}\label{sec-preliminaries}

\renewcommand{\thetheorem}{\arabic{section}.\arabic{theorem}}
\setcounter{theorem}{0}

Let $\F_q$ be the finite field of $q$ elements, where $q$ is an odd prime power and $\F_q^*=\F_q\setminus\{0\}$. Let $\F_q[x]$ be the polynomial ring over $\F_q$.
Let
$
\boldsymbol{\alpha}=(\alpha_1,\alpha_2,\ldots,\alpha_n)
$
be a vector of pairwise distinct elements of $\F_q$, and let
\[
\boldsymbol{v}
=
(v_1,\ldots,v_n)\in(\F_q^*)^n.
\]
Define the evaluation map
\[
\operatorname{ev}_{\boldsymbol{\alpha},\boldsymbol{v}}
:\F_q[x]\longrightarrow\F_q^n
\]
by
\[
f(x)\longmapsto\bigl(v_1f(\alpha_1),v_2f(\alpha_2),\ldots,v_nf(\alpha_n)\bigr).
\]
\begin{definition}
Let \(k\) be a positive integer and let \(0\le \ell\le \min\{k,n-k\}. \)
Let \(\mathbf{h} = (h_1, h_2, \ldots, h_\ell)\) \text{and} \(\mathbf{t} = (t_1, t_2, \ldots, t_\ell)\)
be two vectors of length \(\ell\) over \(\mathbb{Z}\), where
\(h_1,\ldots,h_\ell\) are pairwise distinct,
\(t_1,\ldots,t_\ell\) are pairwise distinct, and
\(0\le h_i\le k-1,\; 1\le t_i\le n-k,\; 1\le i\le \ell\).
Let
\(\boldsymbol{\eta}
=(\eta_1,\eta_2,\ldots,\eta_\ell)
\in(\F_q^*)^\ell.\)
When \(\ell=0\), the vectors
\(\mathbf{h}\), \(\boldsymbol{t}\), and \(\boldsymbol{\eta}\)
are understood to be empty, and the corresponding sum of twist terms
is understood to be zero.

The set of
\((\boldsymbol{t},\mathbf{h},\boldsymbol{\eta})\)-twisted
polynomials is defined as
\[
\mathcal{P}_{\boldsymbol{t},\mathbf{h},\boldsymbol{\eta}}^{n,k}
=
\left\{
\sum_{i=0}^{k-1}f_i x^i
+
\sum_{j=1}^{\ell}
\eta_j f_{h_j}x^{k-1+t_j}
:
f_i\in\F_q
\right\},
\]
which is a \(k\)-dimensional \(\F_q\)-linear subspace of
\(\F_q[x]\).

For each \(1\le j\le\ell\), the term
\[
\eta_j f_{h_j}x^{k-1+t_j}
\]
is called a twist term, where \(h_j\) is the corresponding hook,
\(t_j\) is the twist parameter, and \(\eta_j\) is the twist
coefficient. The exponent \(k-1+t_j\) is called the corresponding
twist exponent.
\end{definition}


\begin{definition}
Let the notation be as in Definition~2.1. The twisted generalized
Reed--Solomon (TGRS) code of length \(n\) and dimension \(k\) is defined by
\[
C_{n,k}(\boldsymbol{\alpha},\mathbf{v},
\boldsymbol{t},\mathbf{h},\boldsymbol{\eta})
=
\left\{
\bigl(
v_1f(\alpha_1),v_2f(\alpha_2),\ldots,v_nf(\alpha_n)
\bigr)
:
f\in
\mathcal{P}_{\boldsymbol{t},\mathbf{h},\boldsymbol{\eta}}^{n,k}
\right\}.
\]
When
\(
\mathbf{v}=\mathbf{1}=(1,\ldots,1),
\)
the code is called a twisted Reed--Solomon (TRS) code.

When \(\ell=0\), the TGRS code reduces to a generalized
Reed--Solomon (GRS) code; in particular, when
\(\mathbf{v}=\mathbf{1}\), it reduces to a Reed--Solomon (RS) code.
\end{definition}

We record the generator matrix explicitly, since its row structure is the key to the arguments in the next section. For $0\le i\le k-1$, define the row vector
\[
\boldsymbol{g}_i=
\begin{cases}
(v_1\alpha_1^i,\ldots,v_n\alpha_n^i),
& i\notin\{h_1,\ldots,h_\ell\},\\[1mm]
\bigl(v_1(\alpha_1^{h_j}+\eta_j\alpha_1^{k-1+t_j}),\ldots,
      v_n(\alpha_n^{h_j}+\eta_j\alpha_n^{k-1+t_j})\bigr),
& i=h_j.
\end{cases}
\]
Then a generator matrix is
\begin{equation}\label{eqggg}
G_{n,k}(\boldsymbol{\alpha},\boldsymbol{v},
\boldsymbol{t},\boldsymbol{h},\boldsymbol{\eta})
=
\begin{pmatrix}
\boldsymbol{g}_0\\
\boldsymbol{g}_1\\
\vdots\\
\boldsymbol{g}_{k-1}
\end{pmatrix}.
\end{equation}
Thus a twist does not add a new generator row; rather, it replaces the ordinary evaluation row of degree $h_j$ by that row plus a multiple of the evaluation row of degree $k-1+t_j$.

For later use, define
\[
\mathbf{V}_m
=
(\alpha_1^m,\alpha_2^m,\ldots,\alpha_n^m),
\]
\(
\text{where } 0\le m\le n-1.
\)

For a TRS code \((\boldsymbol{v}=\mathbf{1})\), the generator rows in (\ref{eqggg}) become
\begin{equation}\label{eqggg2}
\boldsymbol{g}_i=
\begin{cases}
\mathbf{V}_i, & i\notin\{h_1,\ldots,h_\ell\},\\
\mathbf{V}_{h_j}+\eta_j\mathbf{V}_{k-1+t_j}, & i=h_j.
\end{cases}
\end{equation}
The compact representation (\ref{eqggg2}) is the form used in the sequel. Since $\alpha_1,\ldots,\alpha_n$ are pairwise distinct, the Vandermonde matrix
\[
V=
\begin{pmatrix}
\mathbf{V}_0\\
\mathbf{V}_1\\
\vdots\\
\mathbf{V}_{n-1}
\end{pmatrix}
\]
is nonsingular. Consequently,
\(
\{\mathbf{V}_0,\mathbf{V}_1,\ldots,\mathbf{V}_{n-1}\}
\)
is a basis of $\F_q^n$.

\begin{definition}
Let $C,D\subseteq\F_q^n$ be linear codes. The pair $(C,D)$ is an LCP if
\(
C\cap D=\{\mathbf{0}\} \text{ and } C+D=\F_q^n.
\)
If $D=C^\perp$, then $C$ is an LCD code.
\end{definition}

\begin{lemma}\label{eq:10821}
Let $C$ and $D$ be linear codes of length $n$ and dimensions $k$ and $n-k$, respectively. Let $G_C$ and $G_D$ be generator matrices of $C$ and $D$. Then $(C,D)$ is an LCP if and only if
\[
G=
\begin{pmatrix}
G_C\\
G_D
\end{pmatrix}
\]
is nonsingular.
\end{lemma}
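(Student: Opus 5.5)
The plan is to reduce the statement to a linear-algebra fact: the rows of $G$ span $C+D$, and nonsingularity of a square matrix is equivalent to its rows being linearly independent and spanning $\F_q^n$. First I will fix the setup. Since $\dim C=k$ and $\dim D=n-k$, we may take $G_C$ to be a $k\times n$ matrix of full row rank $k$ and $G_D$ an $(n-k)\times n$ matrix of full row rank $n-k$. Then $G$ is an $n\times n$ matrix, and its row space is exactly $C+D$, because every row lies in $C$ or in $D$ and the rows of $G_C$ and $G_D$ span $C$ and $D$ respectively.

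For the forward direction, assume $(C,D)$ is an LCP. Then $C+D=\F_q^n$, so the row space of $G$ is all of $\F_q^n$. Hence $\operatorname{rank} G=n$, and $G$ is nonsingular. Alternatively, one can argue via dimensions: $\dim(C+D)=\dim C+\dim D-\dim(C\cap D)=n-0=n$.

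For the converse, assume $G$ is nonsingular. Then its row space $C+D$ has dimension $n$, so $C+D=\F_q^n$. The dimension formula then gives $\dim(C\cap D)=k+(n-k)-n=0$, so $C\cap D=\{\mathbf 0\}$. Both defining conditions of an LCP in Definition~2.3 hold.

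There is no real obstacle here. The one point to state carefully is that generator matrices are taken with full row rank, so that $G$ is square. That is exactly where the hypotheses $\dim C=k$ and $\dim D=n-k$ are used, together with the dimension formula $\dim(C+D)=\dim C+\dim D-\dim(C\cap D)$.
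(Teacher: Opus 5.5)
Your proof is correct and follows essentially the same route as the paper. Both arguments use that $G$ is square, so nonsingularity is a statement about its $n$ rows, and both use $\dim C+\dim D=n$ to pass between $C+D=\F_q^n$ and $C\cap D=\{\mathbf{0}\}$; the only cosmetic difference is that the paper phrases nonsingularity as linear independence of the rows and gets the trivial intersection first, whereas you phrase it as the row space $C+D$ being all of $\F_q^n$ and get the intersection condition second.
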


\begin{proof}
The rows of $G_C$ form a basis of $C$, while the rows of $G_D$ form a basis of $D$. Hence $G$ is nonsingular if and only if these $n$ rows are linearly independent. Since $\dim C+\dim D=n$, this is equivalent to $C\cap D=\{\mathbf{0}\}$, and therefore to $C+D=\F_q^n$.
\end{proof}

\section{LCPs of twisted Reed-Solomon codes}

\renewcommand{\thetheorem}{\arabic{section}.\arabic{theorem}}
\setcounter{theorem}{0}

Throughout this section, assume
\(
k\le n-k.
\)
Consider the two TRS codes
\begin{equation}\label{eq:CD}
C=C_{n,k}\bigl(\boldsymbol{\alpha},\mathbf{1},
\boldsymbol{t},\boldsymbol{h},\boldsymbol{\eta}\bigr)
\text{ and }
D=C_{n,n-k}\bigl(\boldsymbol{\alpha},\mathbf{1},
\boldsymbol{\gamma},\boldsymbol{b},\boldsymbol{\delta}\bigr),
\end{equation}
where
\(
\boldsymbol{h}=(0,1,\ldots,r-1)
\text{ and }
\boldsymbol{b}=(0,1,\ldots,s-1),
\)
with
\(0\le r\le k\), \(0\le s\le k\).
Here, when \(r=0\) or \(s=0\), the corresponding vector
\(\boldsymbol{h}\) or \(\boldsymbol{b}\), respectively, is understood
to be empty.

By~(\ref{eqggg2}), the generator matrices of \(C\) and \(D\) take
the forms
\begin{equation*}
G_C=
\begin{pmatrix}
\mathbf{V}_0+\eta_1\mathbf{V}_{k-1+t_1}\\
\mathbf{V}_1+\eta_2\mathbf{V}_{k-1+t_2}\\
\vdots\\
\mathbf{V}_{r-1}+\eta_r\mathbf{V}_{k-1+t_r}\\
\mathbf{V}_r\\
\vdots\\
\mathbf{V}_{k-1}
\end{pmatrix},
\end{equation*}
and
\begin{equation*}
G_D=
\begin{pmatrix}
\mathbf{V}_0+\delta_1\mathbf{V}_{n-k-1+\gamma_1}\\
\mathbf{V}_1+\delta_2\mathbf{V}_{n-k-1+\gamma_2}\\
\vdots\\
\mathbf{V}_{s-1}+\delta_s\mathbf{V}_{n-k-1+\gamma_s}\\
\mathbf{V}_s\\
\vdots\\
\mathbf{V}_{n-k-1}
\end{pmatrix}.
\end{equation*}

Equivalently, their rows are
\begin{equation}\label{eq:0821}
\mathbf{C}_i=
\begin{cases}
\mathbf{V}_{i-1}+\eta_i\mathbf{V}_{k-1+t_i},&1\le i\le r,\\
\mathbf{V}_{i-1},&r<i\le k,
\end{cases}
\end{equation}
and
\begin{equation}\label{eq:082101}
\mathbf{D}_j=
\begin{cases}
\mathbf{V}_{j-1}+\delta_j\mathbf{V}_{n-k-1+\gamma_j},&1\le j\le s,\\
\mathbf{V}_{j-1},&s<j\le n-k.
\end{cases}
\end{equation}

These matrices make the structure of the later proofs visible: the ordinary parts occupy the Vandermonde positions \(0,1,\ldots\), while the twist terms create additional entries at the positions \(k-1+t_i\) and \(n-k-1+\gamma_j\).

\begin{lemma}
Assume \(0\le r<k\). If \((C,D)\) is an LCP, then \(s=k\). Consequently,
\[
\{\gamma_1,\ldots,\gamma_k\}=\{1,\ldots,k\}.
\]
\end{lemma}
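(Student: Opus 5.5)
The plan is to work with the stacked generator matrix
\[
G=\begin{pmatrix}G_C\\ G_D\end{pmatrix}.
\]
By Lemma~\ref{eq:10821}, $(C,D)$ is an LCP exactly when $G$ is nonsingular. So it suffices to show that $s<k$ forces two rows of $G$ to coincide.

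\textbf{Step 1: $s=k$.} Suppose $s\le k-1$. Since $r<k$, the row $\mathbf{C}_k$ is untwisted, and by (\ref{eq:0821}) it equals $\mathbf{C}_k=\mathbf{V}_{k-1}$. On the $D$ side we have the standing assumption $k\le n-k$, so $k\le n-k$ and $k>s$. Hence $\mathbf{D}_k$ is also untwisted, and by (\ref{eq:082101}) it equals $\mathbf{D}_k=\mathbf{V}_{k-1}$.

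Thus $G$ contains two identical rows and is singular. Lemma~\ref{eq:10821} then says $(C,D)$ is not an LCP, a contradiction. Therefore $s\ge k$. Combined with the standing hypothesis $s\le k$, this gives $s=k$.

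A slightly more general version of the same observation is that any index $m$ with $\max\{r,s\}\le m\le k-1$ produces the common row $\mathbf{V}_m$. The index $m=k-1$ is the one that is always available once $r<k$.

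\textbf{Step 2: the set of twist parameters.} By Definition~2.1 applied to $D$, which has dimension $n-k$, the twist parameters satisfy $1\le\gamma_j\le n-(n-k)=k$, and they are pairwise distinct. With $s=k$ there are exactly $k$ of them, namely $\gamma_1,\ldots,\gamma_k$. These are $k$ pairwise distinct integers in $\{1,\ldots,k\}$, so they exhaust this set:
\[
\{\gamma_1,\ldots,\gamma_k\}=\{1,\ldots,k\}.
\]

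\textbf{Where care is needed.} The argument is short. The one place to be careful is verifying that $\mathbf{D}_k$ really is an ordinary Vandermonde row, that is, that index $k$ lies in the range $s<j\le n-k$. This is exactly where the assumptions $s<k$ and $k\le n-k$ are used. The "consequently" part uses only the range constraint on $\gamma_j$ from the definition, not the LCP property.
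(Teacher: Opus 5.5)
Your proof is correct and is essentially the paper's argument. The paper also uses the untwisted row $\mathbf{V}_{k-1}$ shared by $C$ and $D$ when $s<k$; it phrases this as a nonzero vector in $C\cap D$, while you phrase it as two equal rows of the stacked matrix, which avoids having to note $\mathbf{V}_{k-1}\neq\mathbf{0}$. Both proofs then count the $k$ distinct values $\gamma_j\in\{1,\ldots,k\}$.
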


\begin{proof}
Suppose \(s<k\). Since \(r<k\), the vector \(\mathbf{V}_{k-1}\) occurs as an untwisted generator row of \(C\). Because \(k\le n-k\) and \(s<k\), the same vector \(\mathbf{V}_{k-1}\) also occurs as an untwisted generator row of \(D\). Hence,
\[
\mathbf{V}_{k-1}\in C\cap D,
\]
contradicting \(C\cap D=\{\mathbf{0}\}\). Therefore, \(s\ge k\).

By Definition~2.1, for a TRS code of dimension \(n-k\),
the twist parameters satisfy
\[
1\le \gamma_i\le n-(n-k)=k.
\]
Since the \(\gamma_i\) are pairwise distinct, \(s\le k\). Hence \(s=k\). Since \(\gamma_1,\ldots,\gamma_k\) are \(k\) distinct elements of \(\{1,\ldots,k\}\), their set is exactly \(\{1,\ldots,k\}\).
\end{proof}

\begin{remark}
The equality
\[
\{\gamma_1,\ldots,\gamma_k\}=\{1,\ldots,k\}
\]
does not by itself imply \(\gamma_i=i\) for each \(i\).
The parameter \(\gamma_i\) is paired with the hook \(b_i=i-1\)
and the coefficient \(\delta_i\). Therefore, whenever an argument
requires the \(i\)-th hook to be paired specifically with the exponent
\(n-k-1+i\), the condition \(\gamma_i=i\) will be imposed explicitly.
\end{remark}

\begin{lemma}
If \((C,D)\) is an LCP, then the union of the supports of the
generator rows of \(C\) and \(D\), with respect to the Vandermonde
basis \(\{\mathbf{V}_0,\ldots,\mathbf{V}_{n-1}\}\), is
\(
\{0,1,\ldots,n-1\}.
\)
In particular,
\[
\{n-k,n-k+1,\ldots,n-1\}\subseteq E,
\]
where
\(
E=
\{k-1+t_i:1\le i\le r\}
\cup
\{n-k-1+\gamma_j:1\le j\le s\}.
\)
\end{lemma}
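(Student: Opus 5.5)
The plan is to argue by contraposition through the coefficient-matrix viewpoint from the introduction. Expand every generator row of $C$ and $D$ in the Vandermonde basis $\{\mathbf V_0,\ldots,\mathbf V_{n-1}\}$. By (\ref{eq:0821}) and (\ref{eq:082101}), each row $\mathbf C_i$ or $\mathbf D_j$ is a combination of at most two basis vectors. Collect the coefficients into an $n\times n$ matrix $A$, one row per generator row, one column per exponent $m\in\{0,\ldots,n-1\}$. Then the stacked matrix of Lemma~\ref{eq:10821} factors as
\[
G=\begin{pmatrix}G_C\\ G_D\end{pmatrix}=AV.
\]
Since $V$ is nonsingular, $G$ is nonsingular if and only if $A$ is. The support of a row, with respect to the Vandermonde basis, is the set of columns where the corresponding row of $A$ is nonzero.

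Suppose the union of the supports misses some $m\in\{0,\ldots,n-1\}$. Then column $m$ of $A$ is zero, so $A$ is singular, and hence $G$ is singular. By Lemma~\ref{eq:10821}, $(C,D)$ is not an LCP. This proves the first claim. Equivalently, with no matrices: every row of $G$ lies in the span of $\{\mathbf V_j : j\neq m\}$, which has dimension $n-1$, so the $n$ rows cannot be linearly independent.

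For the second claim, read off the supports explicitly. The supports of the rows of $C$ are contained in
\[
\{0,\ldots,k-1\}\cup\{k-1+t_i : 1\le i\le r\},
\]
and those of $D$ in
\[
\{0,\ldots,n-k-1\}\cup\{n-k-1+\gamma_j : 1\le j\le s\}.
\]
The coefficients $\eta_i,\delta_j$ are nonzero, but containment is all that is needed here. Because $k\le n-k$, the ordinary (untwisted) parts lie in $\{0,\ldots,n-k-1\}$. So any index $m\ge n-k$ in the union must come from a twist exponent, that is, from $E$. Combined with the first part, this gives $\{n-k,\ldots,n-1\}\subseteq E$.

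The only subtle step is the factorization $G=AV$ and the observation that a zero column of $A$ forces singularity. Both are immediate once the rows are written in the Vandermonde basis, so I expect no real obstacle. The point to state carefully is that the expansion is unique and the columns are indexed by exponents $0,\ldots,n-1$. This in turn needs every twist exponent to be at most $n-1$, which holds because $t_i\le n-k$ and $\gamma_j\le k$.
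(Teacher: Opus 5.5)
Your proof is correct and follows the paper's argument. A missing index confines all $n$ stacked rows to a span of dimension less than $n$ (your zero column of $A$ is the same observation), and the inclusion $\{n-k,\ldots,n-1\}\subseteq E$ then follows because, when $k\le n-k$, the untwisted components only reach $\{0,\ldots,n-k-1\}$.
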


\begin{proof}
Every row of \(G_C\) and \(G_D\), when expressed with respect to the
Vandermonde basis
\(
\{\mathbf{V}_0,\mathbf{V}_1,\ldots,\mathbf{V}_{n-1}\},
\)
has its support contained in
\[
S=\{0,\ldots,k-1\}
\cup\{0,\ldots,n-k-1\}
\cup E.
\]
Since \(1\le t_i\le n-k\) and \(1\le\gamma_j\le k\), we have
\[
S\subseteq\{0,1,\ldots,n-1\}.
\]

Suppose that some \(e\in\{0,1,\ldots,n-1\}\) does not belong to \(S\).
Then every row of the stacked generator matrix
\[
\begin{pmatrix}
G_C\\
G_D
\end{pmatrix}
\]
lies in the subspace
\(
\operatorname{span}\{\mathbf{V}_e:e\in S\}.
\)

Since \(|S|<n\) and
\(\mathbf{V}_0,\ldots,\mathbf{V}_{n-1}\) form a basis of
\(\mathbb{F}_q^n\), this subspace has dimension
\[
\dim\operatorname{span}\{\mathbf{V}_e:e\in S\}
=|S|<n.
\]
Hence the stacked generator matrix is singular, contradicting
Lemma~\ref{eq:10821}. Therefore,
\[
S=\{0,1,\ldots,n-1\}.
\]

The ordinary Vandermonde components of the generator rows already
account for all indices in
\(
\{0,1,\ldots,n-k-1\}.
\)
Consequently, every remaining index in
\(\{n-k,\ldots,n-1\}\) must belong to \(E\). Hence
\(
\{n-k,n-k+1,\ldots,n-1\}\subseteq E.
\)
\end{proof}

The preceding proof shows that the exponents of the twist terms appearing in the generator matrices play a decisive role in the LCP property.

In the following conditions, the equality $\gamma_i=i$
is imposed as an ordering condition rather than a consequence of
$\{\gamma_1,\ldots,\gamma_k\}=\{1,\ldots,k\}$.

\begin{theorem}\label{thm:01}
Let \(C\) and \(D\) be as above, and suppose
\[
\{n-k,\ldots,n-1\}
\subseteq
\{k-1+t_i:1\le i\le r\}
\cup
\{n-k-1+\gamma_j:1\le j\le s\}.
\]
Then \((C,D)\) is an LCP whenever one of the following conditions holds:

\begin{enumerate}
\renewcommand{\labelenumi}{(\roman{enumi})}

\item
\(0\le r<k\), \(s=k\), and
\(1\le t_i\le n-2k\) for \(1\le i\le r\).

\item
\(r=k\), all the exponents of the twist terms appearing in the
generator matrices of \(C\) and \(D\), namely
\[
k-1+t_1,\ldots,k-1+t_k, n-k-1+\gamma_1,\ldots,n-k-1+\gamma_s,
\]
are pairwise distinct, and
\(k-1+t_i\le n-k-1\) for \(1\le i\le s\), while
\(k-1+t_i\ge n-k\) for \(s+1\le i\le k\).

\item
\(0\le r<k\), \(s=k\),
\(1\le t_1<\cdots<t_r\), and
\(\gamma_i=i\) for \(1\le i\le k\).
Moreover, there exists \(j\) with \(1\le j\le r-1\) such that
\(
t_j\le n-2k<t_{j+1},
\)
and
$t_i=n-2k+i\text{ and }\eta_i\ne\delta_i$
for every \(j+1\le i\le r\).
\item
\(0\le r<k\), \(s=k\),
\(\gamma_i=i\) for \(1\le i\le k\), and $t_i=n-2k+i\text{ and }
\eta_i\ne\delta_i$ for every \(1\le i\le r\).

\item
\(r=k\), \(1\le t_1<\cdots<t_k\),
\(0\le s\le k\), \(n\le 2k-1+t_1\), and $\gamma_i=i\text{ and }
\eta_i\ne\delta_i$ for every \(1\le i\le s\).
\end{enumerate}
\end{theorem}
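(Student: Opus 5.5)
We work throughout with the coefficient matrix relative to the Vandermonde basis. Let \(G\) be the matrix obtained by stacking \(G_C\) on \(G_D\). By Lemma~\ref{eq:10821}, it suffices to show that \(G\) is nonsingular. Since \(V\) is nonsingular, this is the same as showing that the coefficient matrix \(A\) in \(G=AV\) is nonsingular. Elementary row operations preserve nonsingularity, so the plan is to row-reduce the \(n\) rows \(\mathbf C_1,\ldots,\mathbf C_k,\mathbf D_1,\ldots,\mathbf D_{n-k}\) of (\ref{eq:0821}) and (\ref{eq:082101}) until \(A\) becomes visibly nonsingular.

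\textbf{Reduction of the \(C\)-rows.} Every \(\mathbf D_j\) has the form \(\mathbf V_{j-1}\) plus, possibly, a high twist term. We keep the \(D\)-rows fixed as "pivot rows" for the positions \(0,\ldots,n-k-1\). Then we show that the \(k\) modified \(C\)-rows reduce to vectors supported on the high positions \(\{n-k,\ldots,n-1\}\), and that their \(k\times k\) coefficient block is nonsingular.

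The first step is to replace \(\mathbf C_i\) by \(\mathbf C_i-\mathbf D_i\) for \(1\le i\le k\); this is allowed because \(k\le n-k\). This cancels \(\mathbf V_{i-1}\) and leaves only twist terms:
\begin{itemize}
\item \(\eta_i\mathbf V_{k-1+t_i}-\delta_i\mathbf V_{n-k-1+\gamma_i}\) when \(i\) is twisted in both codes;
\item \(-\delta_i\mathbf V_{n-k-1+\gamma_i}\) when \(i>r\) and \(i\le s\);
\item \(\eta_i\mathbf V_{k-1+t_i}\) when \(i\le r\) and \(i>s\).
\end{itemize}

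The second step handles a remaining term \(\eta_i\mathbf V_{k-1+t_i}\) with a low exponent \(k-1+t_i\le n-k-1\). Since \(k-1+t_i\ge k\ge s\), the row \(\mathbf D_{k+t_i}=\mathbf V_{k-1+t_i}\) is untwisted. Subtracting \(\eta_i\mathbf D_{k+t_i}\) therefore removes this term without introducing anything new.

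After these two steps the \(D\)-rows form a block that is upper unitriangular on the positions \(0,\ldots,n-k-1\). So \(A\) is nonsingular if and only if the \(k\times k\) block of the reduced \(C\)-rows on the positions \(n-k,\ldots,n-1\) is nonsingular.

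\textbf{Case analysis.}
\begin{itemize}
\item \emph{Case (i).} All \(t_i\le n-2k\), so every \(\eta_i\)-term is low and is removed by the second step. The reduced rows are \(-\delta_i\mathbf V_{n-k-1+\gamma_i}\), \(1\le i\le k\). Since the \(\gamma_i\) are distinct and fill \(\{1,\ldots,k\}\) (this is also what the covering hypothesis forces), the block is a monomial matrix with nonzero entries \(-\delta_i\).
\item \emph{Case (ii).} For \(i\le s\), the \(\eta_i\)-term is low and is removed, leaving \(-\delta_i\mathbf V_{n-k-1+\gamma_i}\). For \(i>s\), the row is \(\eta_i\mathbf V_{k-1+t_i}\) with a high exponent. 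The pairwise distinctness hypothesis says these \(k\) high exponents are distinct elements of the \(k\)-set \(\{n-k,\ldots,n-1\}\). Hence the block is again monomial with nonzero entries.
\item \emph{Case (iv).} Here \(\gamma_i=i\) and \(k-1+t_i=n-k-1+i\). Row \(i\) becomes \((\eta_i-\delta_i)\mathbf V_{n-k-1+i}\) for \(i\le r\) and \(-\delta_i\mathbf V_{n-k-1+i}\) for \(i>r\). This is a diagonal block with nonzero diagonal, because \(\eta_i\ne\delta_i\).
\item \emph{Case (iii).} This mixes (i) and (iv). Indices \(i\le j\) behave as in (i), giving \(-\delta_i\mathbf V_{n-k-1+i}\). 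Indices \(j<i\le r\) behave as in (iv). Indices \(i>r\) give \(-\delta_i\mathbf V_{n-k-1+i}\). The block is diagonal and nonsingular.
\item \emph{Case (v).} The condition \(n\le 2k-1+t_1\) makes every exponent \(k-1+t_i\) high. Since \(t_1<\cdots<t_k\) and \(t_k\le n-k\), the \(k\) increasing values \(k-1+t_i\) lie in \(\{n-k,\ldots,n-1\}\), which forces \(k-1+t_i=n-k-1+i\). With \(\gamma_i=i\) for \(i\le s\), the block is diagonal with entries \(\eta_i-\delta_i\) for \(i\le s\) and \(\eta_i\) for \(i>s\), all nonzero.
\end{itemize}

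The main obstacle is the bookkeeping in the second step. One must check that the \(D\)-row used to cancel a low \(\eta_i\)-term is genuinely untwisted, which relies on \(k-1+t_i\ge k\ge s\). Otherwise, cancelling would reintroduce a \(\delta\)-term at a high position and could destroy the diagonal or monomial structure; I would verify this point separately in each case. In case (v) the remaining subtlety is the counting argument that pins down \(t_i=n-2k+i\).
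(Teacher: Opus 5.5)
Your proof is correct. It uses the same mechanism as the paper: coordinates in the Vandermonde basis, with the untwisted parts of the $D$-rows acting as pivots at the low positions. The organization, however, is different.

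The paper writes a linear relation $\sum_i a_i\mathbf C_i+\sum_j b_j\mathbf D_j=0$. In each of the five cases separately, it compares coefficients in a case-specific order. In (i) it first kills $b_1,\dots,b_k$ at the high positions and then $a_i$ at $\mathbf V_{i-1}$. In (iii)--(v) it solves $2\times2$ systems at the paired positions.

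You instead do one uniform elimination. You replace $\mathbf C_i$ by $\mathbf C_i-\mathbf D_i$ and clear the low twist terms with the untwisted rows $\mathbf D_{k+t_i}$, which is legitimate because $s\le k<k+t_i\le n-k$. This brings the coefficient matrix to the form $\begin{pmatrix}0&B\\ I_{n-k}&\ast\end{pmatrix}$. Each case then reduces to checking that a single $k\times k$ block $B$ on $\mathbf V_{n-k},\dots,\mathbf V_{n-1}$ is monomial or diagonal.

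What your route buys:
\begin{itemize}
\item The low-position bookkeeping is done once instead of five times.
\item It gives an if-and-only-if criterion: when $\max\{r,s\}=k$, $(C,D)$ is an LCP exactly when $B$ is nonsingular. This makes precise the heuristic in the remark following the theorem.
\end{itemize}
In exchange, the paper's route avoids any block-determinant step and reads off the vanishing coefficients directly.

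You should state explicitly the case missing from your list of outcomes for $\mathbf C_i-\mathbf D_i$, namely $i>\max\{r,s\}$, where $\mathbf C_i-\mathbf D_i=0$. It never occurs, because each of (i)--(v) has $r=k$ or $s=k$. This is exactly the necessary condition ($r<k\Rightarrow s=k$) proved in the first lemma of Section~III.
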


\begin{proof}
From Lemma~\ref{eq:10821}, it suffices to prove that
\[
G=
\begin{pmatrix}
G_C\\
G_D
\end{pmatrix}
\]
is nonsingular. Since
\[
V=
\begin{pmatrix}
\mathbf{V}_0\\
\vdots\\
\mathbf{V}_{n-1}
\end{pmatrix}
\]
is a nonsingular Vandermonde matrix, one may write \(G=AV\) for a uniquely determined coefficient matrix \(A\). Hence \(G\) is nonsingular if and only if \(A\) is nonsingular.

Equivalently, consider a linear relation
\begin{equation}\label{eq:ddw8}
\sum_{i=1}^{k}a_i\mathbf{C}_i
+
\sum_{j=1}^{n-k}b_j\mathbf{D}_j
=0,
\end{equation}
where \(\mathbf{C}_i\) and \(\mathbf{D}_j\) are given in
(\ref{eq:0821}) and (\ref{eq:082101}), respectively.
Because \(\mathbf{V}_0,\ldots,\mathbf{V}_{n-1}\) form a basis of \(\F_q^n\), the coefficient of each Vandermonde basis vector must vanish separately.

\medskip
\noindent\textbf{Case (i).}
Since \(s=k\) and \(\gamma_1,\ldots,\gamma_k\) are \(k\) pairwise distinct elements of \(\{1,\ldots,k\}\), we have
\(
\{\gamma_1,\ldots,\gamma_k\}=\{1,\ldots,k\}.
\)
Then the exponents of the twist terms in the generator matrix of \(D\) are exactly
\(
\{n-k,n-k+1,\ldots,n-1\},
\)
although not necessarily in this order.

On the other hand, by the assumption \(1\le t_i\le n-2k\),
\(
k-1+t_i\le n-k-1,
\)
so none of the exponents of the twist terms appearing in the generator matrix of \(C\) belongs to the set \(\{n-k,n-k+1,\ldots,n-1\}\). Comparing the coefficients corresponding to the exponents \(n-k,n-k+1,\ldots,n-1\) gives
$
b_j\delta_j=0,
$
where \(1\le j\le k\).
Since every \(\delta_j\) is nonzero, we have
\(
b_1=\cdots=b_k=0.
\)

We then consider the coefficients of the Vandermonde basis vectors
\(
\mathbf{V}_0,\ldots,\mathbf{V}_{k-1}.
\)
By the assumption \(1\le t_i\le n-2k\), all twist exponents appearing in \(G_C\) satisfy
\(
k\le k-1+t_i\le n-k-1,
\)
whereas all twist exponents appearing in \(G_D\) satisfy
\(
n-k\le n-k-1+\gamma_i\le n-1.
\)
Therefore, none of the twist terms in \(G_C\) or \(G_D\) contributes to the coefficients of
\(
\mathbf{V}_0,\ldots,\mathbf{V}_{k-1}.
\)
Hence, for each \(1\le i\le k\), the coefficient of \(\mathbf{V}_{i-1}\) in (\ref{eq:ddw8}) is
\(
a_i+b_i.
\)
Since \(b_1=\cdots=b_k=0\), we obtain
\(
a_1=\cdots=a_k=0.
\)

Finally, the remaining rows of \(G_D\) that do not contain twist terms correspond to the remaining Vandermonde basis vectors. Therefore,
\(
b_{k+1}=\cdots=b_{n-k}=0.
\)
Thus, (\ref{eq:ddw8}) has only the zero solution.

\medskip
\noindent\textbf{Case (ii).}
By the assumed exponent ranges, the first \(s\) twist terms in the
generator matrix of \(C\) have exponents in
\(
\{k,\ldots,n-k-1\},
\)
whereas the remaining \(k-s\) twist terms in \(C\) have exponents in
\(
\{n-k,\ldots,n-1\}.
\)

On the other hand, since \(1\le \gamma_j\le k\), every twist exponent
appearing in \(D\) satisfies
\(
n-k\le n-k-1+\gamma_j\le n-1,
\)
where \(1\le j\le s\).
Hence, among the twist terms appearing in \(C\) and \(D\), exactly
\(k-s\) twist terms from \(C\) and \(s\) twist terms from \(D\) have
exponents in
\(
\{n-k,\ldots,n-1\}.
\)
Since all twist exponents are pairwise distinct, these \(k\) exponents
occupy this \(k\)-element set exactly.

Moreover, each of these high-degree Vandermonde basis vectors occurs
in exactly one twist term among the generator rows of \(C\) and \(D\).
More precisely, for \(s+1\le i\le k\),
\(\mathbf{V}_{k-1+t_i}\) occurs only in the twist term of the
\(i\)-th row of \(C\), while for \(1\le j\le s\),
\(\mathbf{V}_{n-k-1+\gamma_j}\) occurs only in the twist term of the
\(j\)-th row of \(D\). Therefore, comparing the corresponding
coefficients in~(\ref{eq:ddw8}) gives
\[
\eta_i a_i=0\text{ and }\delta_j b_j=0,
\]
where \(s+1\le i\le k\),\text{ and }\(1\le j\le s\).
Since every \(\eta_i\) and \(\delta_j\) is nonzero, it follows that
\(a_{s+1}=\cdots=a_k=0\) and \(b_1=\cdots=b_s=0\).

For \(1\le i\le s\), neither the twist terms of \(C\) nor those of
\(D\) contribute to the coefficient of \(\mathbf{V}_{i-1}\).
Hence, comparison of the coefficient of \(\mathbf{V}_{i-1}\) gives
\(
a_i+b_i=0.
\)
Since \(b_i=0\), we obtain
\(
a_1=\cdots=a_s=0.
\)

Similarly, for \(s<i\le k\), comparison of the coefficient of
\(\mathbf{V}_{i-1}\) gives
\(
a_i+b_i=0.
\)
Since \(a_i=0\) for \(s<i\le k\), we obtain
\(
b_{s+1}=\cdots=b_k=0.
\)
Thus,
\(a_1=\cdots=a_k=0\) and \(b_1=\cdots=b_k=0\).
Substituting these equalities into~(\ref{eq:ddw8}), the linear relation
reduces to
\[
\sum_{j=k+1}^{n-k} b_j\mathbf{V}_{j-1}=0.
\]
Since
\(
\mathbf{V}_k,\mathbf{V}_{k+1},\ldots,
\mathbf{V}_{n-k-1}
\)
are linearly independent, it follows that
\(
b_{k+1}=\cdots=b_{n-k}=0.
\)
Consequently,~(\ref{eq:ddw8}) has only the zero solution.

\medskip
\noindent\textbf{Case (iii).}
For \(1\le i\le j\), the exponent of the twist term in \(C\) satisfies
\(
k-1+t_i\le n-k-1.
\)
Since \(\gamma_i=i\), the twist term in the \(i\)-th row of \(D\) has exponent
\(
n-k-1+i.
\)

For \(i\le j\), none of the twist terms in \(C\) has this exponent. Hence, comparison at \(\mathbf{V}_{n-k-1+i}\) gives
\(
b_i=0.
\)
Then the coefficient of \(\mathbf{V}_{i-1}\) gives
\(
a_i+b_i=0,
\)
and therefore \(a_i=0\).

For \(j+1\le i\le r\), the assumption \(t_i=n-2k+i\) gives
\(
k-1+t_i=n-k-1+i.
\)
Thus, the \(i\)-th twist terms in \(C\) and \(D\) have the same exponent. The corresponding coefficient equations are
\[
a_i+b_i=0 \text{ and } \eta_i a_i+\delta_i b_i=0.
\]
Equivalently,
\[
\begin{pmatrix}
1&1\\
\eta_i&\delta_i
\end{pmatrix}
\begin{pmatrix}
a_i\\
b_i
\end{pmatrix}
=0.
\]
Its determinant is \(\delta_i-\eta_i\ne0\), so \(a_i=b_i=0\).

For \(r<i\le k\), the row
\(
\mathbf{C}_i=\mathbf{V}_{i-1}
\)
contains no twist term, whereas the \(i\)-th twist term in \(D\) has exponent \(n-k-1+i\), which does not occur among the twist exponents of \(C\). Thus, comparison at \(\mathbf{V}_{n-k-1+i}\) gives \(b_i=0\), and comparison at \(\mathbf{V}_{i-1}\) gives \(a_i=0\).

Finally, the remaining untwisted rows of \(D\) correspond to the remaining Vandermonde basis vectors, so their coefficients also vanish. Therefore, (\ref{eq:ddw8}) has only the zero solution.

\medskip
\noindent\textbf{Case (iv).}
For every \(1\le i\le r\),
\(
k-1+t_i=n-k-1+i.
\)
Hence the twist terms in \(C\) and \(D\) contribute to the same Vandermonde basis vector \(\mathbf{V}_{n-k-1+i}\). The corresponding coefficient equations are identical to the \(2\times2\) system in Case (iii). Since \(\eta_i\ne\delta_i\), we obtain
\(
a_i=b_i=0.
\)
For \(r<i\le k\), the corresponding row of \(D\) contains a twist term with exponent \(n-k-1+i\), whereas no twist term in \(C\) has this exponent. Thus, comparison at \(\mathbf{V}_{n-k-1+i}\) gives \(b_i=0\), and comparison at \(\mathbf{V}_{i-1}\) gives \(a_i=0\).

Finally, all remaining coefficients vanish. Hence, (\ref{eq:ddw8}) has only the zero solution.

\medskip
\noindent\textbf{Case (v).}
Since
\(
1\le t_1<\cdots<t_k
\),\text{ and }
\(
n\le 2k-1+t_1,
\)
all \(k\) distinct exponents
\(
k-1+t_1,\ldots,k-1+t_k
\)
belong to the \(k\)-element set
\(
\{n-k,\ldots,n-1\}.
\)
Consequently,
\(
k-1+t_i=n-k+i-1,
\)
or equivalently,
\(
t_i=n-2k+i.
\)

For \(1\le i\le s\), the condition \(\gamma_i=i\) implies that the twist terms in \(C\) and \(D\) correspond to the same Vandermonde basis vector \(\mathbf{V}_{n-k+i-1}\). The corresponding coefficient equations form the same \(2\times2\) system as in Case (iii). Since
\(
\delta_i-\eta_i\ne0,
\) we obtain
\(
a_i=b_i=0.
\)

For \(s<i\le k\), the exponent \(n-k+i-1\) appears among the twist exponents of \(C\), but not among those of \(D\). Therefore, comparison at \(\mathbf{V}_{n-k+i-1}\) gives \(a_i=0\), and comparison at \(\mathbf{V}_{i-1}\) gives \(b_i=0\).

Finally, the remaining untwisted rows of \(D\) correspond to the remaining Vandermonde basis vectors, and their coefficients also vanish. Hence, (\ref{eq:ddw8}) has only the zero solution.

In every case, the stacked generator matrix is nonsingular, and therefore \((C,D)\) is an LCP.
\end{proof}

\begin{remark}
The proof of Theorem~\ref{thm:01} relies on the uniqueness of the
Vandermonde representation. In each case, the essential point is that
every high-degree basis vector
\(
\mathbf{V}_{n-k},\ldots,\mathbf{V}_{n-1}
\)
is either occupied by exactly one twist term or by a pair of twist terms
whose coefficient matrix is nonsingular. This guarantees that all
coefficients in a linear relation vanish.
\end{remark}

We now present a construction of MDS LCPs.

\begin{lemma}[{\cite[Theorem~1]{BeelenEtAl2018}}]\label{lem:kno}
Let \(1\le \ell\le k\), and let
\[
\F_{s_0}\subsetneq\F_{s_1}\subsetneq\cdots\subsetneq\F_{s_\ell}=\F_q
\]
be a chain of finite fields. Suppose \(k<n\le s_0\), and let
\(\alpha_1,\ldots,\alpha_n\in\F_{s_0}\) be pairwise distinct.
Let \(\boldsymbol{t},\mathbf{h},\boldsymbol{\eta}\) be chosen as in
Definition~2.1 such that
\(
\eta_i\in\F_{s_i}\setminus\F_{s_{i-1}},
\text{ for }
1\le i\le\ell.
\)
Then
\(
C_{n,k}(\boldsymbol{\alpha},\boldsymbol{1},\boldsymbol{t},
\mathbf{h},\boldsymbol{\eta})
\)
is MDS.
\end{lemma}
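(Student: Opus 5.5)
The argument follows the standard proof for twisted Reed--Solomon codes: a code is MDS exactly when every nonzero codeword has at most \(k-1\) zeros. We argue by contradiction. Suppose some nonzero \(f\in\mathcal{P}_{\boldsymbol{t},\mathbf{h},\boldsymbol{\eta}}^{n,k}\) vanishes at \(k\) distinct points \(\alpha_{j_1},\ldots,\alpha_{j_k}\) chosen from the \(\alpha_i\). Write \(f=\sum_{i=0}^{k-1}f_ix^i+\sum_{m=1}^{\ell}\eta_mf_{h_m}x^{k-1+t_m}\). Set \(M\) to be the \(k\times k\) matrix whose \((u,i)\) entry is the \(u\)-th evaluation of the \(i\)-th generator polynomial. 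Then \(f\) vanishing at these points means exactly that \(\det M=0\) (equivalently, the \(k\) columns of the generator matrix indexed by \(j_1,\ldots,j_k\) are dependent). So it suffices to show \(\det M\neq0\) for every choice of \(k\) points.

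\textbf{Step 1.} Expand \(\det M\) as a polynomial in \(\eta_1,\ldots,\eta_\ell\). Column \(h_m\) is \(\mathbf{a}_{h_m}+\eta_m\mathbf{a}_{k-1+t_m}\), where \(\mathbf{a}_e=(\alpha_{j_u}^e)_u\). By multilinearity,
\[
\det M=\sum_{S\subseteq\{1,\ldots,\ell\}}\Bigl(\prod_{m\in S}\eta_m\Bigr)\Delta_S .
\]
Here \(\Delta_S\) is the determinant obtained by replacing column \(h_m\) with \(\mathbf{a}_{k-1+t_m}\) for each \(m\in S\). All entries of every \(\Delta_S\) lie in \(\F_{s_0}\). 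The term with \(S=\emptyset\) is the ordinary Vandermonde determinant, which is nonzero since the points are distinct.

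\textbf{Step 2.} Regroup the sum as a polynomial in \(\eta_\ell\) alone:
\[
\det M=P_0+\eta_\ell P_1,
\]
where \(P_0,P_1\in\F_{s_{\ell-1}}\), because they involve only \(\eta_1,\ldots,\eta_{\ell-1}\) and elements of \(\F_{s_0}\). If \(P_1\neq0\), then \(\det M=0\) would force \(\eta_\ell=-P_0/P_1\in\F_{s_{\ell-1}}\), contradicting the choice of \(\eta_\ell\). If \(P_1=0\), then \(\det M=P_0\).

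\textbf{Step 3.} Apply induction on \(\ell\) to \(P_0\). Note that \(P_0\) is precisely the \(\det M\) for the code with only the first \(\ell-1\) twists. Recursing in the same way eventually reduces to the Vandermonde determinant, which is nonzero. Hence \(\det M\neq0\) in all cases.

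The main obstacle is the case \(P_1=0\) in Step 2: we must check that \(P_0\) is the same kind of determinant at the lower level of the field chain. This holds because setting the \(\ell\)-th twist to zero leaves \(P_0\). Note that \(P_0\) is \emph{not} \(\det M\) with \(\eta_\ell=0\) substituted partway through some other expansion; it is exactly the truncated-twist determinant by linearity in column \(h_\ell\). So the induction closes, using only \(\eta_m\in\F_{s_m}\setminus\F_{s_{m-1}}\) and \(\alpha_i\in\F_{s_0}\).
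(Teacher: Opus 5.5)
Your proof is correct and is essentially the standard argument behind the cited result; the paper gives no proof of its own and only quotes Theorem~1 of Beelen et al. That argument, like yours, shows that every $k\times k$ column minor of the generator matrix is multilinear in $\eta_1,\ldots,\eta_\ell$ with coefficients in $\F_{s_0}$ and a nonzero Vandermonde constant term, and then uses induction down the subfield chain via $\det M=P_0+\eta_\ell P_1$ with $P_0,P_1\in\F_{s_{\ell-1}}$ to force the minor to be nonzero.
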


\begin{remark}\label{rem:zero-twist}
When \(\ell=0\), the code
\(
C_{n,k}(\boldsymbol{\alpha},\boldsymbol{1},\boldsymbol{t},
\mathbf{h},\boldsymbol{\eta})
\)
reduces to a Reed-Solomon code and is MDS.
\end{remark}

\begin{corollary}\label{cor:MDSLCP}
Let \(C\) and \(D\) be the TRS codes defined in~(3), with
dimensions \(k\) and \(n-k\), respectively. Suppose that each of
\(C\) and \(D\) either has no twist or satisfies the hypotheses of
Lemma~\ref{lem:kno}. If \(C\) and \(D\) satisfy one of the
conditions in Theorem~3.4, then \((C,D)\) is an MDS LCP.
Moreover, its security parameter is
\[
\min\{d(C),d(D^\perp)\}=n-k+1,
\]
which is optimal.
\end{corollary}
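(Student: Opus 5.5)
The plan is to split the statement into three independent parts: the LCP property, the MDS property of $C$ and $D$, and the value and optimality of the security parameter.

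\textbf{LCP property.} By hypothesis, $C$ and $D$ satisfy one of the conditions (i)--(v) of Theorem~\ref{thm:01}. That theorem then gives directly that $(C,D)$ is an LCP, i.e.\ $C\oplus D=\F_q^n$. Nothing further is needed here.

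\textbf{MDS property.} I treat each code separately.
\begin{itemize}
\item If $C$ has no twist ($r=0$), it is a Reed--Solomon code, hence MDS by Remark~\ref{rem:zero-twist}. Otherwise it satisfies the hypotheses of Lemma~\ref{lem:kno} with $\ell=r$ and dimension $k$. So $C$ is an $[n,k,n-k+1]_q$ code.
\item For $D$ the dimension is $n-k$ and the number of twists is $\ell=s$. Lemma~\ref{lem:kno} requires $1\le s\le n-k$ and $n-k<n\le s_0$. The first holds because $s\le k\le n-k$, which uses the standing assumption $k\le n-k$. The second holds because $k\ge1$.
\item The remaining requirements on $(\boldsymbol\gamma,\boldsymbol b,\boldsymbol\delta)$ are those of Definition~2.1, which hold by construction, together with the subfield-chain condition on $\boldsymbol\delta$, which is part of the hypothesis. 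The chains used for $C$ and for $D$ may differ; only $\alpha_i\in\F_{s_0}$ for the relevant chain matters.
\item Hence $D$ is an $[n,n-k,k+1]_q$ MDS code, again using Remark~\ref{rem:zero-twist} when $s=0$.
\end{itemize}

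\textbf{Security parameter.} I will invoke the standard fact that the dual of an MDS code is MDS. Then $D^\perp$ is an $[n,k,n-k+1]_q$ code, so $d(D^\perp)=n-k+1=d(C)$, and therefore
\[
\min\{d(C),d(D^\perp)\}=n-k+1 .
\]
For optimality, let $(C',D')$ be any LCP with $\dim C'=k$. The Singleton bound applied to $C'$ gives $d(C')\le n-k+1$, so
\[
\min\{d(C'),d(D'^\perp)\}\le n-k+1 .
\]
Hence the value $n-k+1$ cannot be exceeded. One could also note that $\dim D'^\perp=k$ gives the same bound.

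The only real care point is the verification step for $D$ in the MDS part: confirming that the parameter ranges demanded by Lemma~\ref{lem:kno}, stated for a generic dimension, are met when the dimension is $n-k$. This is where $k\le n-k$ and $s\le k$ are used. Everything else is a direct citation.
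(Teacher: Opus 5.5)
Your proposal is correct and follows the paper's proof step for step. The paper likewise uses Theorem~\ref{thm:01} for the LCP property, Lemma~\ref{lem:kno} and Remark~\ref{rem:zero-twist} for the MDS property of both codes, the fact that the dual of an MDS code is MDS to get $d(D^\perp)=n-k+1$, and the Singleton bound for optimality. Your explicit check that the parameter ranges of Lemma~\ref{lem:kno} hold for the dimension-$(n-k)$ code $D$, using $s\le k\le n-k$ and $k\ge1$, is a small verification the paper leaves implicit.
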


\begin{proof}
By Lemma~\ref{lem:kno} and Remark~\ref{rem:zero-twist},
both \(C\) and \(D\) are MDS. Moreover, Theorem~3.4 shows that
\((C,D)\) is an LCP. Hence \((C,D)\) is an MDS LCP.

Since \(C\) is an \([n,k]\) MDS code,
\(
d(C)=n-k+1.
\)
Since \(D\) is an \([n,n-k]\) MDS code, its dual \(D^\perp\) is an
\([n,k]\) MDS code. Therefore,
\(
d(D^\perp)=n-k+1.
\)
Consequently,
\[
\min\{d(C),d(D^\perp)\}=n-k+1.
\]
By the Singleton bound, this is the largest possible value of the
security parameter. Hence the security parameter is optimal.
\end{proof}

\section{LCD twisted Reed--Solomon codes}

\renewcommand{\thetheorem}{\arabic{section}.\arabic{theorem}}
\setcounter{theorem}{0}

We now specialize the previous method to the case $D=C^\perp$. Consider
\[
C=C_{n,k}\bigl(
\boldsymbol{\alpha},\mathbf{1},
\boldsymbol{t},\boldsymbol{h},\boldsymbol{\eta}
\bigr),
\]
where $\boldsymbol{\eta}=(\eta_1,\ldots,\eta_k)$,
$
\boldsymbol{h}=(0,1,\ldots,k-1)
$ and
$
\boldsymbol{t}=(t_1,\ldots,t_k)
$
for
$
1\le t_1<\cdots<t_k\le n-k.
$

Assume that $\{\alpha_1,\ldots,\alpha_n\}$ is a multiplicative subgroup of $\F_q^*$. Thus,
$
\alpha_i^n=1
$
for all $i$. Since $n\mid(q-1)$, the characteristic of $\F_q$ does not divide $n$, so $1/n$ is well defined in $\F_q$.
The generator matrix of $C$ is
\begin{equation*}
G_C=
\begin{pmatrix}
\mathbf{V}_0+\eta_1\mathbf{V}_{k-1+t_1}\\
\mathbf{V}_1+\eta_2\mathbf{V}_{k-1+t_2}\\
\vdots\\
\mathbf{V}_{k-1}+\eta_k\mathbf{V}_{k-1+t_k}
\end{pmatrix},
\end{equation*}
that is, the $i$-th row of the generator matrix of $C$
can be written as
\begin{equation}\label{eq:CiLCD}
\mathbf{C}_i
=
\mathbf{V}_{i-1}+\eta_i\mathbf{V}_{k-1+t_i},
\end{equation}
where \(1\le i\le k\). We first recall the following duality result.

\begin{lemma}[{\cite[Theorem~2]{BeelenEtAl2018}}]\label{lem:dual}
Let the evaluation points form a multiplicative subgroup of
\(\mathbb{F}_q^*\). Define \(u_j=\frac{\alpha_j}{n}\) for \(1\le j\le n\),
and let
\(\mathbf u=(u_1,\ldots,u_n)\).
Then
\[
C_{n,k}(\boldsymbol{\alpha},\mathbf 1,
\boldsymbol t,\boldsymbol h,\boldsymbol\eta)^\perp
=
C_{n,n-k}(\boldsymbol{\alpha},\mathbf u,
k-\boldsymbol h,n-k-\boldsymbol t,
-\boldsymbol\eta).
\]
Here
\(
k-\boldsymbol h=(k-h_1,\ldots,k-h_k)
\)
\text{ and }
\(
n-k-\boldsymbol t=(n-k-t_1,\ldots,n-k-t_k)
\)
are, respectively, the twist vector and the hook vector of the dual
code.
\end{lemma}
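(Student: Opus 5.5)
The plan is to verify directly that the proposed code is contained in $C^\perp$ and then to compare dimensions. Write $C'=C_{n,n-k}(\boldsymbol{\alpha},\mathbf u,k-\boldsymbol h,n-k-\boldsymbol t,-\boldsymbol\eta)$. The argument is written for general $\ell$ twists; in our application $\ell=k$.

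\textbf{Step 1: $C'$ is a well-defined TGRS code of dimension $n-k$.} The hooks $n-k-t_i$ are pairwise distinct and lie in $\{0,\ldots,n-k-1\}$, because $1\le t_i\le n-k$. The twists $k-h_i$ are pairwise distinct and lie in $\{1,\ldots,k\}$, because $0\le h_i\le k-1$. So the data satisfy Definition~2.1 for dimension $n-k$.

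The twist exponent attached to hook $n-k-t_i$ is $(n-k)-1+(k-h_i)=n-1-h_i$. Hence, by~(\ref{eqggg}), the generator rows of $C'$ are:
\begin{itemize}
\item $\mathbf u\ast\mathbf V_b$ for non-hook $b\in\{0,\ldots,n-k-1\}$;
\item $\mathbf u\ast(\mathbf V_{n-k-t_i}-\eta_i\mathbf V_{n-1-h_i})$ for $1\le i\le\ell$.
\end{itemize}
Here $\ast$ is the componentwise product. All exponents that occur are at most $n-1$. Since $\{\mathbf V_0,\ldots,\mathbf V_{n-1}\}$ is a basis and every $u_j\neq0$, these $n-k$ rows are linearly independent. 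Indeed, their coefficient matrix with respect to the Vandermonde basis is triangular in the low-degree indices. Thus $\dim C'=n-k=\dim C^\perp$, and it only remains to show $C'\subseteq C^\perp$.

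\textbf{Step 2: basic pairing.} Since $\{\alpha_1,\ldots,\alpha_n\}$ is the cyclic subgroup of order $n$, we have $\sum_j\alpha_j^m=n$ if $n\mid m$ and $0$ otherwise. Therefore, for $0\le a,b\le n-1$,
\[
\mathbf V_a\cdot(\mathbf u\ast\mathbf V_b)=\frac1n\sum_{j=1}^n\alpha_j^{a+b+1}=[\,a+b+1=n\,].
\]
The congruence $a+b+1\equiv 0\pmod n$ reduces to equality with $n$ because $1\le a+b+1\le 2n-1$.

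\textbf{Step 3: case check of all row pairs.}
\begin{itemize}
\item Untwisted row $\mathbf V_a$ of $C$ ($a\le k-1$, $a$ not a hook) against untwisted $\mathbf u\ast\mathbf V_b$ ($b\le n-k-1$). Here $a+b+1\le n-1$, so the product is $0$.
\item $\mathbf V_a$ against a twisted dual row. The low part gives $0$ as above. The high part gives $a+(n-1-h_i)+1=n$ only if $a=h_i$, which is excluded since $a$ is not a hook.
\item Twisted row $\mathbf V_{h_i}+\eta_i\mathbf V_{k-1+t_i}$ against untwisted $\mathbf u\ast\mathbf V_b$. The low part gives $0$. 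The high part hits $n$ only if $b=n-k-t_i$, which is a hook, so it is excluded.
\item $\mathbf V_{h_i}+\eta_i\mathbf V_{k-1+t_i}$ against $\mathbf u\ast(\mathbf V_{n-k-t_j}-\eta_j\mathbf V_{n-1-h_j})$. Of the four terms:
\begin{itemize}
\item $h_i+n-k-t_j+1<n$, so this term is $0$;
\item $k-1+t_i+n-1-h_j+1\ge n+t_i>n$, so this term is $0$;
\item the term $-\eta_j[h_i+n-1-h_j+1=n]$ is nonzero only if $i=j$;
\item the term $\eta_i[k-1+t_i+n-k-t_j+1=n]$ is nonzero only if $i=j$.
\end{itemize}
For $i\neq j$ everything vanishes. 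For $i=j$ the total is $\eta_i-\eta_i=0$.
\end{itemize}
Hence $C'\subseteq C^\perp$, and equality follows from Step 1.

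The main obstacle is the bookkeeping in the last case. The key point is that the ranges $1\le t_i\le n-k$ and $0\le h_i\le k-1$ keep every exponent sum strictly between $0$ and $2n$. This makes "divisible by $n$" equivalent to "equal to $n$", which is what forces the cross terms to cancel exactly in the diagonal case $i=j$.
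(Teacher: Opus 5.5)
Your proof is correct. The root-of-unity sum gives $\sum_j u_j\alpha_j^{m}=[\,m=n-1\,]$ for every $0\le m\le 2n-2$. Your case check over row pairs is exhaustive and each exponent computation is right. The identity block in the low-degree columns $0,\ldots,n-k-1$ gives $\dim C'=n-k$.

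The paper itself does not prove this lemma. It imports it from \cite[Theorem~2]{BeelenEtAl2018}, and the only supporting computation it gives is for the multiplier. Since the $\alpha_j$ are the roots of $x^n-1$, it notes $\prod_{\ell\ne j}(\alpha_j-\alpha_\ell)=n\alpha_j^{n-1}=n\alpha_j^{-1}$. It then substitutes this into the usual GRS dual multiplier $\bigl(\prod_{\ell\ne j}(\alpha_j-\alpha_\ell)\bigr)^{-1}$.

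Your route is a self-contained direct verification (orthogonality plus a dimension count), and it is more informative in one respect. For arbitrary distinct points, the GRS multiplier only yields $\sum_j u_j\alpha_j^m=[\,m=n-1\,]$ in the range $0\le m\le n-1$, by Lagrange interpolation. Pairings that involve a twist component, such as $h_i+(n-1-h_j)$ with $h_i>h_j$, or $(k-1+t_i)+(n-1-h_j)\ge n$, reach exponents up to $2n-2$. There the sum is in general a nonzero complete homogeneous symmetric function of the evaluation points. Your computation shows that the subgroup hypothesis is exactly what makes these terms vanish. It also explains the sign $-\boldsymbol\eta$ through the diagonal cancellation $\eta_i-\eta_i=0$. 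Your argument works verbatim for any number $\ell$ of twists and any distinct hooks, not only $\mathbf h=(0,\ldots,k-1)$.

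The one point you leave implicit is $\dim C=k$, which you need for $\dim C^\perp=n-k$. It follows from the same identity-block observation, because the twist exponents $k-1+t_i\ge k$ avoid the columns $0,\ldots,k-1$.
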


We now apply Lemma~\ref{lem:dual} to the present setting, where
\(
\mathbf h=(0,1,\ldots,k-1).
\)
For \(1\le i\le k\), the corresponding twist and hook parameters of the
dual code are
\(
k-h_i=k-i+1
\)\text{ and }
\(
n-k-t_i,
\)
respectively. Hence, before applying the column multipliers, the
generator row of the dual code corresponding to the hook
\(n-k-t_i\) has the form
$
\mathbf{V}_{n-k-t_i}-\eta_i\mathbf{V}_{n-i}
$
since
\(
(n-k)-1+(k-i+1)=n-i.
\)

Because
\(\{\alpha_1,\ldots,\alpha_n\}\) is a multiplicative subgroup of
\(\F_q^*\), we have
\(
\alpha_j^n=1,
\)
for \(1\le j\le n\).
Moreover, the evaluation points are precisely the roots of
\(x^n-1\). Then
\[
\prod_{\ell\ne j}(\alpha_j-\alpha_\ell)
=
n\alpha_j^{\,n-1}
=
n\alpha_j^{-1},
\]
and hence the corresponding dual column multiplier is
\[
u_j
=
\left(
\prod_{\ell\ne j}(\alpha_j-\alpha_\ell)
\right)^{-1}
=
\frac{\alpha_j}{n}.
\]
Therefore, multiplication by the dual column multiplier sends
\(
\mathbf{V}_e \text{ to } \frac1n\mathbf{V}_{e+1}.
\)

Consequently, a parity-check matrix \(H_C\) of \(C\) may be chosen
with rows \(H_m\), where \(1\le m\le n-k\) and
\begin{equation}\label{eq:Hm}
\mathbf{H}_m=
\begin{cases}
\dfrac{1}{n}\mathbf{V}_m,
& m\ne n+1-k-t_i \text{ for all } 1\le i\le k,\\[1.2ex]
\dfrac{1}{n}\left(\mathbf{V}_m-\eta_i\mathbf{V}_{n-i+1}\right),
& m=n+1-k-t_i \text{ for some } 1\le i\le k.
\end{cases}
\end{equation}
Here \(\mathbf{V}_n=\mathbf{V}_0\) since \(\alpha_j^n=1\) for every \(j\).

By Lemma~2.4, the pair \((C,D)\) forms an LCP if and only if the stacked
generator matrix
\[
\begin{pmatrix}
G_C\\
G_D
\end{pmatrix}
\]
is nonsingular. In particular, when \(D=C^\perp\), the code \(C\) is
LCD if and only if
\[
\begin{pmatrix}
G_C\\
H_C
\end{pmatrix}
\]
is nonsingular. Therefore, as in Section~III, it suffices to study the
coefficient matrix of these rows with respect to the Vandermonde basis
\(
\mathbf{V}_0,\mathbf{V}_1,\ldots,\mathbf{V}_{n-1}.
\)

For convenience, define
\(
e_i=k-1+t_i, \text{ and }
m_i=n+1-k-t_i.
\)

\begin{theorem}\label{thm:LCD}
The code $C$ is LCD under any of the following conditions.
\begin{enumerate}
\item[(i)] $k\ge2$, $n\ge2k+t_k-1$, and
$t_i+t_{k+1-i}=n-2k+2$ for $1\le i\le k$.

\item[(ii)] $n=2k$, $1+\eta_1^2\ne0$, and
$\eta_i+\eta_{k+2-i}\ne0$ for $2\le i\le k$.

\item[(iii)] $2k<n\le2k-1+t_1$, and
$\eta_i+\eta_{k+2-i}\ne0$ for $2\le i\le k$.
\end{enumerate}
\end{theorem}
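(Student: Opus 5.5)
The plan is to follow the strategy of Theorem~\ref{thm:01}. By Lemma~\ref{eq:10821}, $C$ is LCD if and only if the $n$ rows $\mathbf C_1,\dots,\mathbf C_k$ from (\ref{eq:CiLCD}) and $\mathbf H_1,\dots,\mathbf H_{n-k}$ from (\ref{eq:Hm}) are linearly independent. The factor $1/n$ can be dropped, so I take the rows to be $n\mathbf H_m$. I will write a relation
\[
\sum_{i=1}^k a_i\mathbf C_i+\sum_{m=1}^{n-k}b_m\mathbf H_m=0
\]
and compare coefficients of each $\mathbf V_0,\dots,\mathbf V_{n-1}$, using $\mathbf V_n=\mathbf V_0$. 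The bookkeeping is as follows. The $\mathbf C_i$ occupy positions $i-1\in\{0,\dots,k-1\}$ and $e_i$. The $\mathbf H_m$ occupy positions $m\in\{1,\dots,n-k\}$. When $m=m_i$, the row $\mathbf H_{m_i}$ also occupies position $n-i+1$; for $i=1$ this position is $0$, and for $i\ge2$ it lies in $\{n-k+1,\dots,n-1\}$. The whole proof consists of locating these collisions in each case.

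\textbf{Case (i).} The hypothesis gives
\[
m_i=n+1-k-t_i=k-1+t_{k+1-i}=e_{k+1-i},
\]
and $n\ge 2k+t_k-1$ gives $e_i\le n-k$. For $2\le i\le k$, position $n-i+1\ge n-k+2$ is therefore touched only by the twist of $\mathbf H_{m_i}$, which forces $b_{m_i}=0$. At position $e_j$, where $k\le e_j\le n-k$, the equation is $\eta_j a_j+b_{e_j}=0$ with $b_{e_j}=b_{m_{k+1-j}}$. For $j\le k-1$ we have $k+1-j\ge2$, so $b_{e_j}=0$ and hence $a_j=0$. In particular $a_1=0$, which uses $k\ge2$.

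The remaining pair involves $a_k$ and $b_{m_1}$. Position $0$ gives $a_1-\eta_1 b_{m_1}=0$, so $b_{m_1}=0$, and then position $e_k=m_1$ gives $a_k=0$. Finally, positions $i-1$ give $a_i+b_{i-1}=0$, which kills $b_1,\dots,b_{k-1}$, and the untouched positions kill the remaining $b_m$.

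\textbf{Cases (ii) and (iii).} As in Case (v) of Theorem~\ref{thm:01}, the hypotheses force $t_i=n-2k+i$: in (ii) because $n-k=k$, and in (iii) because $n\le 2k-1+t_1$. Hence
\[
e_i=n-k+i-1,\qquad m_i=k+1-i\in\{1,\dots,k\},
\]
and for $i\ge2$ the twist of $\mathbf H_{m_i}$ sits at position $n-i+1=e_{k+2-i}$. For $2\le j\le k$, position $j-1$ gives $a_j+b_{j-1}=0$. Position $e_j>n-k$ gives
\[
\eta_j a_j-\eta_{k+2-j}b_{j-1}=0 .
\]
Together these give $(\eta_j+\eta_{k+2-j})a_j=0$, so $a_j=0$ and $b_{j-1}=0$ for $2\le j\le k$.

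What remains is the pair $a_1$ and $b_k=b_{m_1}$, where the two cases differ. Position $0$ gives $a_1-\eta_1 b_k=0$, and position $e_1=n-k$ gives $\eta_1 a_1+b_{n-k}=0$.
\begin{itemize}
\item In (iii) we have $n-k>k$, so position $k$ is touched only by $\mathbf H_k$. This gives $b_k=0$, hence $a_1=0$, and then $b_{n-k}=0$.
\item In (ii) we have $n-k=k$, so the two equations form the $2\times2$ system with matrix $\begin{pmatrix}1&-\eta_1\\ \eta_1&1\end{pmatrix}$. Its determinant is $1+\eta_1^2\ne0$, so $a_1=b_k=0$.
\end{itemize}
In both cases the leftover ordinary rows $\mathbf H_m$ then vanish, because their positions are touched by nothing else.

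The main obstacle is the careful index bookkeeping: checking that the positions I claim are ``untouched'' really are untouched. This means verifying that $e_j\le n-k$ in (i), that $n-i+1>n-k$, and that $m_i\ne n-k$ in (iii). It also requires treating the wrap-around $\mathbf V_n=\mathbf V_0$ for $i=1$ separately, since that is exactly where the $1+\eta_1^2$ condition and the hypothesis $k\ge2$ enter.
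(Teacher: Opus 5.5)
Your proposal is correct and follows essentially the same argument as the paper. Both use the same Vandermonde-position bookkeeping and the identity $m_i=e_{k+1-i}$. In Case (i) both run the same chain: $b_{m_i}=0$ for $i\ge2$, then $a_i=0$ for $i\le k-1$, then $b_{m_1}=0$, then $a_k=0$. In Cases (ii)--(iii) both use the same $2\times2$ blocks, with determinants $-(\eta_j+\eta_{k+2-j})$ and $1+\eta_1^2$, or the isolated coordinate $\mathbf V_k$ when $n>2k$.

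The only blemish is a harmless arithmetic slip in Case (i): $n-i+1\ge n-k+2$ should read $n-i+1\ge n-k+1$. The fact you actually need, $n-i+1>n-k\ge e_j$, still holds.
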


\begin{proof}
Consider a linear relation
\begin{equation}\label{eq:LCDrelation}
\sum_{i=1}^{k} a_i \mathbf{C}_i
+
\sum_{m=1}^{n-k} b_m \mathbf{H}_m
=0,
\end{equation}
where \(\mathbf{C}_i\) and \(\mathbf{H}_m\) are given in
(\ref{eq:CiLCD}) and (\ref{eq:Hm}), respectively.

Since \(\mathbf{V}_0,\ldots,\mathbf{V}_{n-1}\) form a basis of
\(\F_q^n\), the coefficient of each Vandermonde basis vector must
vanish separately.

Since the parity-check rows in (\ref{eq:Hm}) contain a common
nonzero factor \(1/n\), we multiply each parity-check row by \(n\).
This row scaling does not affect linear independence. For simplicity,
we keep the notation \(\mathbf{H}_m\) and \(b_m\) for the rescaled
parity-check rows and the corresponding coefficients, respectively.

\medskip
\noindent\textbf{Case (i).}
The inequality \(n\ge 2k+t_k-1\) is equivalent to
\(e_k\le n-k\). Since
\(
e_i=k-1+t_i
\)
and \(1\le t_1<\cdots<t_k\), every twist exponent appearing in the
generator matrix of \(C\) lies in
\(
\{k,\ldots,n-k\}.
\)

For \(2\le i\le k\), the twisted component of
\(\mathbf{H}_{m_i}\) occurs at
\(
\mathbf{V}_{n-i+1},
\)
with \(n-i+1\in\{n-k+1,\ldots,n-1\}\).
These basis vectors neither occur as twist components in the generator
matrix of \(C\) nor as ordinary components of the parity-check rows.
Moreover, the exponents \(n-i+1\), \(2\le i\le k\), are pairwise
distinct. Therefore, comparing the coefficient of
\(\mathbf{V}_{n-i+1}\) in~(\ref{eq:LCDrelation}) gives
\(
-\eta_i b_{m_i}=0,
\)
for \(2\le i\le k\).
Since \(\eta_i\ne0\), it follows that
\(
b_{m_i}=0.
\)

By the symmetry condition
\(
t_i+t_{k+1-i}=n-2k+2,
\)
together with the definitions
\(
e_i=k-1+t_i\text{ and }
m_i=n+1-k-t_i,
\)
we obtain
\[
\begin{aligned}
m_{k+1-i}=n+1-k-t_{k+1-i}=k-1+t_i=e_i.
\end{aligned}
\]
Hence,
$
e_i=m_{k+1-i}
$
for \(1\le i\le k\).

For \(1\le i\le k-1\), comparison of the coefficient of
\(\mathbf{V}_{e_i}\) in~(\ref{eq:LCDrelation}) gives
\(
\eta_i a_i+b_{m_{k+1-i}}=0.
\)
Since \(k+1-i\ge2\), we have already shown that
\(
b_{m_{k+1-i}}=0.
\)
Thus,
\(a_i=0\) for \(1\le i\le k-1\).
In particular, \(a_1=0\) since \(k\ge2\).

Next, since the twisted component of \(\mathbf{H}_{m_1}\) occurs at
\(\mathbf{V}_n=\mathbf{V}_0\), comparison of the coefficient of
\(\mathbf{V}_0\) gives
\(
a_1-\eta_1b_{m_1}=0.
\)
Hence,
\(
b_{m_1}=0.
\)

Finally, since \(e_k=m_1\), comparison of the coefficient of
\(\mathbf{V}_{e_k}\) gives
\(
\eta_k a_k+b_{m_1}=0,
\)
and therefore
\(
a_k=0.
\)
Thus,
\(
a_1=\cdots=a_k=0
\)\text{ and }\(
b_{m_i}=0
\)
for \(1\le i\le k\).

Substituting these equalities into~(\ref{eq:LCDrelation}), the remaining
relation is
\[
\sum_{\substack{1\le m\le n-k\\
m\notin\{m_1,\ldots,m_k\}}}
b_m\mathbf{V}_m=0.
\]
Since the corresponding Vandermonde basis vectors are linearly
independent, we obtain $b_m=0$ for all \(m\notin\{m_1,\ldots,m_k\}\).

Therefore, all coefficients in~(\ref{eq:LCDrelation}) are zero.
Hence,
\[
\begin{pmatrix}
G_C\\
H_C
\end{pmatrix}
\]
is nonsingular, and thus \(C\) is LCD.

\medskip
\noindent\textbf{Case (ii).}
Since $n=2k$ and $t_1\ge1$, we have $e_1\ge k=n-k$. The $k$ distinct exponents
\(
e_1<\cdots<e_k
\)
all belong to the $k$-element set $\{k,k+1,\ldots,2k-1\}$, so they must be exactly this set. Hence,
\[
e_i = k+i-1,\; t_i = i\text{ and } m_i = k+1-i.
\]
For $i=1$, we have
\[
\mathbf{C}_1=\mathbf{V}_0+\eta_1\mathbf{V}_k\text{ and }
n\mathbf{H}_k=\mathbf{V}_k-\eta_1\mathbf{V}_0.
\]
The corresponding coefficient block, with respect to the coordinates $(\mathbf{V}_0,\mathbf{V}_k)$, is
\[
\begin{pmatrix}
1&-\eta_1\\
\eta_1&1
\end{pmatrix},
\]
whose determinant is
$
1+\eta_1^2\ne0.
$
For $2\le i\le k$, put $p=k+2-i$. Then $m_p=i-1$ and $n-p+1=e_i$. Therefore
\[
\mathbf{C}_i=\mathbf{V}_{i-1}+\eta_i\mathbf{V}_{e_i},
\]
while
\[
n\mathbf{H}_{i-1}=\mathbf{V}_{i-1}-\eta_{k+2-i}\mathbf{V}_{e_i}.
\]
The corresponding coefficient block is
\[
\begin{pmatrix}
1&1\\
\eta_i&-\eta_{k+2-i}
\end{pmatrix},
\]
whose determinant equals
\(
-(\eta_i+\eta_{k+2-i})\ne0.
\)
Thus every block is nonsingular. Consequently, all coefficients in the
linear relation (\ref{eq:LCDrelation}) are zero. Hence,
\[
\begin{pmatrix}
G_C\\
H_C
\end{pmatrix}
\]
is nonsingular, and therefore \(C\) is LCD.

\medskip
\noindent\textbf{Case (iii).}
The hypothesis $n\le2k-1+t_1$ is equivalent to $e_1\ge n-k$. Since the $e_i$ are distinct and all lie in the $k$-element set $\{n-k,\ldots,n-1\}$, they fill this entire set. Therefore,
\[
e_i=n-k+i-1,
\;
t_i=n-2k+i \text{ and }
m_i=k+1-i.
\]
For $2\le i\le k$, exactly the same $2\times2$ blocks as in Case (ii) occur, and their determinants are
\(
-(\eta_i+\eta_{k+2-i})\ne0.
\)
It remains to handle $i=1$. Here,
\[
\mathbf{C}_1=\mathbf{V}_0+\eta_1\mathbf{V}_{n-k} \text{ and }n\mathbf{H}_k=\mathbf{V}_k-\eta_1\mathbf{V}_0.
\]
Because \(n>2k\), we have \(k<n-k\). Hence, the coordinate
\(\mathbf{V}_k\) appears only in \(\mathbf{H}_k\) among the exceptional rows and
cannot occur as the exponent of a twist term in the generator matrix of \(C\). Comparing the coefficient of \(\mathbf{V}_k\) in (\ref{eq:LCDrelation}) therefore gives $b_k=0$. The coefficient of \(\mathbf{V}_0\) then gives $a_1-\eta_1b_k=0$,
and hence \(a_1=0\). For \(2\leq i\leq k\), the nonsingularity of the corresponding \(2\times2\) blocks gives
\[
a_i=0\ \text{and}\ b_{i-1}=0.
\]
Consequently,
\[
a_1=\cdots=a_k=0\ \text{and}\ b_1=\cdots=b_k=0.
\]
Moreover, since $m_i=k+1-i,\ 1\leq i\leq k$, we have $\{m_1,\ldots,m_k\}=\{1,\ldots,k\}$. Therefore, for every \(k+1\leq m\leq n-k\), the row \(\mathbf{H}_m\)
contains no twist term and satisfies
\[
n\mathbf{H}_m=\mathbf{V}_m.
\]
After substituting the preceding equalities into (\ref{eq:LCDrelation}), the
remaining linear relation becomes
\[
\sum_{m=k+1}^{n-k} b_m\mathbf{V}_m=0.
\]
Since $\mathbf{V}_{k+1},\mathbf{V}_{k+2},\ldots,\mathbf{V}_{n-k}$ are linearly independent, it follows that
\[
b_{k+1}=\cdots=b_{n-k}=0.
\]
Thus all the coefficients in the linear relation (\ref{eq:LCDrelation}) are
zero. Hence,
\[
\begin{pmatrix}
G_C\\
H_C
\end{pmatrix}
\]
is nonsingular, and therefore \(C\) is LCD.

In all three cases $C\cap C^\perp=\{\mathbf{0}\}$.
\end{proof}

\begin{remark}
The restriction $k\ge2$ in Theorem~4.2(i) is essential for the proof above. If $k=1$, the symmetry condition alone does not force the code to be LCD; an additional condition
\[
1+\eta_1^2\ne0
\]
is needed in the corresponding two-coordinate block.
\end{remark}

\begin{corollary}
Let $C$ satisfy one of the conditions of Theorem~\ref{thm:LCD}. Suppose, in addition, that $C$ satisfies all the hypotheses of Lemma~\ref{lem:kno}. Then $C$ is an MDS LCD twisted Reed--Solomon code. Indeed, Lemma~\ref{lem:kno} implies that $C$ is MDS, while Theorem~\ref{thm:LCD} implies that
\[
C\cap C^\perp=\{\mathbf{0}\}.
\]
Therefore,
\(
d(C)=n-k+1
\)
and the corresponding LCD construction attains the largest possible minimum distance for its length and dimension.
\end{corollary}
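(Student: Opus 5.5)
The corollary combines two earlier results, so the plan is to verify that their hypotheses are compatible and then read off the minimum distance.

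\emph{Step 1 (LCD property).} Since $C$ satisfies one of conditions (i)--(iii) of Theorem~\ref{thm:LCD}, that theorem gives directly that the stacked matrix $\begin{pmatrix}G_C\\ H_C\end{pmatrix}$ is nonsingular. Hence $C\cap C^\perp=\{\mathbf 0\}$ by Lemma~\ref{eq:10821} applied with $D=C^\perp$.

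\emph{Step 2 (MDS property).} By assumption the hypotheses of Lemma~\ref{lem:kno} hold with $\ell=k$ and hooks $\mathbf h=(0,\ldots,k-1)$. These hypotheses are: a subfield chain $\F_{s_0}\subsetneq\cdots\subsetneq\F_{s_k}=\F_q$, evaluation points in $\F_{s_0}$ with $k<n\le s_0$, and $\eta_i\in\F_{s_i}\setminus\F_{s_{i-1}}$. Under them, $C$ is an $[n,k,n-k+1]$ code.

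\emph{Step 3 (compatibility check).} This is the only point needing care. Section~IV additionally assumes that $\{\alpha_1,\ldots,\alpha_n\}$ is a multiplicative subgroup of $\F_q^*$. Lemma~\ref{lem:kno} needs the $\alpha_j$ to lie in $\F_{s_0}$, so I would take the subgroup inside $\F_{s_0}^*$, which requires $n\mid(s_0-1)$. Any subgroup of $\F_{s_0}^*$ is also a subgroup of $\F_q^*$, so the duality description in Lemma~\ref{lem:dual}, and hence the form of $H_C$ in~(\ref{eq:Hm}), still applies. The requirement $\ell=k\le\min\{k,n-k\}$ holds because each case of Theorem~\ref{thm:LCD} has $n\ge 2k$. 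The conditions on $\eta$ in the theorem, such as $1+\eta_1^2\neq0$ and $\eta_i+\eta_{k+2-i}\neq0$, are extra constraints layered on the field-chain choice. Since the corollary assumes both sets of hypotheses at once, nothing needs to be constructed; I only need to confirm that the two lemmas are each being applied within their own stated hypotheses.

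\emph{Step 4 (optimality).} Steps~1 and~2 together make $C$ an MDS LCD code with $d(C)=n-k+1$. The Singleton bound $d\le n-k+1$ shows this is the largest possible minimum distance for an $[n,k]$ code. In the LCD setting the security parameter $\min\{d(C),d((C^\perp)^\perp)\}$ equals $d(C)$, so the construction is optimal. I expect no genuine obstacle; Step~3 is the only place where an implicit assumption has to be made explicit.
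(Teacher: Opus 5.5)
Your proposal is correct and follows essentially the same route as the paper, which justifies the corollary by combining Lemma~\ref{lem:kno} (MDS), Theorem~\ref{thm:LCD} ($C\cap C^\perp=\{\mathbf{0}\}$) and the Singleton bound for optimality. Your Step~3 only spells out the compatibility the paper leaves implicit: the subgroup $\{\alpha_1,\ldots,\alpha_n\}$ then lies in $\F_{s_0}^*$, and $\ell=k\le n-k$ because $n\ge 2k$ in every case, so the addition is harmless and slightly clarifying.
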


\section{Concluding remarks}

In this paper, we study linear complementary pairs (LCPs) constructed from twisted Reed--Solomon (TRS) codes. By representing the generator matrices in the Vandermonde basis, we transform the LCP condition into the nonsingularity problem of a sparse coefficient matrix. Based on this approach, we derive necessary conditions and several sufficient criteria for two TRS codes to form an LCP, according to different overlap patterns of the twist exponents. We further investigate the LCD property of TRS codes and construct several families of LCD TRS codes. Combining these results with known MDS conditions, we obtain MDS LCD codes and MDS LCPs with optimal security parameters. Our results provide a systematic method for studying complementary pairs of twisted Reed--Solomon codes and may lead to further constructions with more general twist configurations.

\end{document}